\documentclass[12pt]{article}
\usepackage{amsmath}
\usepackage{amsthm}
\usepackage{psfrag}
\usepackage{graphicx}
\usepackage{url}
\usepackage{tikz}
\usepackage[LGR,T1]{fontenc}
\usepackage{setspace}
\usepackage{fullpage}

\newcommand{\textgreek}[1]{\begingroup\fontencoding{LGR}\selectfont#1\endgroup}

\newtheorem{theorem}{Theorem}[section]
\newtheorem{proposition}{Proposition}[section]
\numberwithin{equation}{section}

\title{Estimation of Mutation Rates from Fluctuation Experiments via Probability Generating Functions\thanks{Research supported by NSF DMS Grant 0928053, PRISM: Mathematics in Life Sciences}}

\author{Stephen Montgomery-Smith\thanks{Department of Mathematics, University of Missouri, Columbia MO 65211},
Anh Le\thanks{Student in Mathematics of Life Sciences Program, University of Missouri, Columbia MO 65211},
George Smith\thanks{Department of Biology, University of Missouri, Columbia MO 65211}
\\
Sidney Billstein\footnotemark[3],
Hesam Oveys\footnotemark[2],
Dylan Pisechko\footnotemark[3],
Austin Yates\footnotemark[3]}
\date{}

\begin{document}

\maketitle

\begin{abstract}
This paper calculates probability distributions modeling the Luria-Delbr\"uck experiment.  We show that by thinking purely in terms of generating functions, and using a `backwards in time' paradigm, that formulas describing various situations can be easily obtained.  This includes a generating function for Haldane's probability distribution due to Ycart.  We apply our formulas to both simulated and real data created by looking at yeast cells acquiring an immunization to the antibiotic canavanine.

This paper is somewhat incomplete, having been last significantly modified in March 29, 2014.  However the first author feels that this paper has some worthwhile ideas, and so is going to make this paper publicly available.
\end{abstract}

\section{Introduction}

The famous experiment of Luria and Delbr\"uck \cite{luria-delbruck}, \cite{g-smith-et-al} determined whether mutations in bacteria arise via Darwinian evolution or some kind of Lamarckian process.  The experiment consisted of taking small samples of bacteria, and then allowing them to replicate up to a known large number of cells on many plates, and then looking at the distribution of number of cells on each plate that had acquired an immunity to an antibiotic.  The mutation was carefully chosen to be a forward mutation, that is, the probability of the mutation taking place was very much larger than the probability of the mutation reversing itself.

If the immunity was acquired by some kind of Lamarckian process, then one would naturally expect the distribution of surviving cells in the plates to follow a Poisson distribution.  In particular, the variance would have a value very close to the mean, and the probability of a plate having a so called `jackpot,' that is, a much larger number of surviving cells than the mean, would be vanishingly small.

However, Luria and Delbr\"uck argued that if the immunity was acquired by Darwinian evolution, then the variance of the number of surviving cells would be much larger than the mean, and furthermore, the probability of any plate having a `jackpot' would be large enough that it would be observed reasonably often.

Their experiment conclusively validated the latter assumption.  Soon after this, scientists sought after a formula for the distribution of surviving cells in each plate.  According to \cite{sarkar}, the biologist J.B.S.~Haldane produced a formula for the distribution in 1946.  But his work was never widely published at that time, and in 1949 Lea and Coulson \cite{lea-coulson} produced a different distribution, which seems to be the basis of most research these days.

A great advantage of their distribution is that it can be given by a simple generating function, whereas the calculations described by Haldane seem to be very time consuming, and requires enumerating combinatorial structures.  For example, the paper \cite{ma} gives an algorithm whereby the Lea-Coulson distribution can be calculated quickly.  The survey paper \cite{rosche} gives many ways in which one can calculate the \emph{fluctuation rate}, that is, the probability of a single offspring cell acquiring the mutation.  And the paper \cite{asteris} gives a Bayesian approach to estimating the fluctuation rate.

The primary goal of this paper is to promote an approach which uses generating functions from the very beginning.  The other change we propose to the approaches taken in other papers is to think `backwards in time' rather than `forwards in time,' that is, instead of considering how mutants might have developed from initial conditions, look at the final situation and reason out where the mutations must have come from.  We will illustrate this approach under several conditions.  In particular we will be able to obtain a generating function for the Haldane distribution which allows for rapid calculations.

\section{The experimental setup to be modeled}

We assume that in each plate we start from a small number of cells, none of which have the mutation.  Then the cells replicate repeatedly.  When each cell replicates, one of the offspring can acquire the mutation with a probability $\mu$, which is extremely small, effectively infinitesimal.  We continue the replication process until we have obtained a population of $n$ cells, where $n$ is effectively infinite.  We assume that $m=\mu n$ is of order 1.

We shall use the following terminology (which is common in the literature).  We shall call a cell that acquires the mutation, but not from its parent, a \emph{mutation}.  A cell that has the mutant gene will be called a \emph{mutant}.  We will assume that the probability of a mutant have offspring that are not mutants is zero.  This can either come from supposing that the genetic change is much more likely to happen than to be reversed (that is, it is a forward mutation), or by realizing that the population of mutants is so much smaller than the number of non-mutants, that even if the probability of a mutant having non-mutant offspring were of the order $\mu$, we wouldn't observe this anyway.)  Thus all mutations are mutants, but not vice versa.

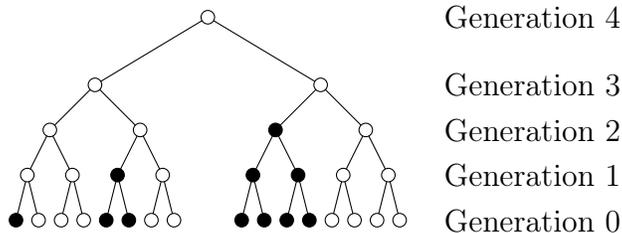
\begin{figure}
\begin{center}
\begin{tikzpicture}\begin{scope}[scale=0.3]
\draw (0,0)--(5,-3);
\draw (0,0)--(-5,-3);
\draw (5,-3)--(7,-5);
\draw (5,-3)--(3,-5);
\draw (7,-5)--(8,-7);
\draw (7,-5)--(6,-7);
\draw (8,-7)--(8.5,-9);
\draw (8,-7)--(7.5,-9);
\draw (6,-7)--(6.5,-9);
\draw (6,-7)--(5.5,-9);
\draw (3,-5)--(4,-7);
\draw (3,-5)--(2,-7);
\draw (4,-7)--(4.5,-9);
\draw (4,-7)--(3.5,-9);
\draw (2,-7)--(2.5,-9);
\draw (2,-7)--(1.5,-9);
\draw (-5,-3)--(-3,-5);
\draw (-5,-3)--(-7,-5);
\draw (-3,-5)--(-2,-7);
\draw (-3,-5)--(-4,-7);
\draw (-2,-7)--(-1.5,-9);
\draw (-2,-7)--(-2.5,-9);
\draw (-4,-7)--(-3.5,-9);
\draw (-4,-7)--(-4.5,-9);
\draw (-7,-5)--(-6,-7);
\draw (-7,-5)--(-8,-7);
\draw (-6,-7)--(-5.5,-9);
\draw (-6,-7)--(-6.5,-9);
\draw (-8,-7)--(-7.5,-9);
\draw (-8,-7)--(-8.5,-9);
\fill[color=white] (0,0) circle(0.3);
\draw (0,0) circle(0.3);
\fill[color=white] (5,-3) circle(0.3);
\draw (5,-3) circle(0.3);
\fill[color=white] (7,-5) circle(0.3);
\draw (7,-5) circle(0.3);
\fill[color=white] (8,-7) circle(0.3);
\draw (8,-7) circle(0.3);
\fill[color=white] (8.5,-9) circle(0.3);
\draw (8.5,-9) circle(0.3);
\fill[color=white] (7.5,-9) circle(0.3);
\draw (7.5,-9) circle(0.3);
\fill[color=white] (6,-7) circle(0.3);
\draw (6,-7) circle(0.3);
\fill[color=white] (6.5,-9) circle(0.3);
\draw (6.5,-9) circle(0.3);
\fill[color=white] (5.5,-9) circle(0.3);
\draw (5.5,-9) circle(0.3);
\fill[color=black] (3,-5) circle(0.3);
\draw (3,-5) circle(0.3);
\fill[color=black] (4,-7) circle(0.3);
\draw (4,-7) circle(0.3);
\fill[color=black] (4.5,-9) circle(0.3);
\draw (4.5,-9) circle(0.3);
\fill[color=black] (3.5,-9) circle(0.3);
\draw (3.5,-9) circle(0.3);
\fill[color=black] (2,-7) circle(0.3);
\draw (2,-7) circle(0.3);
\fill[color=black] (2.5,-9) circle(0.3);
\draw (2.5,-9) circle(0.3);
\fill[color=black] (1.5,-9) circle(0.3);
\draw (1.5,-9) circle(0.3);
\fill[color=white] (-5,-3) circle(0.3);
\draw (-5,-3) circle(0.3);
\fill[color=white] (-3,-5) circle(0.3);
\draw (-3,-5) circle(0.3);
\fill[color=white] (-2,-7) circle(0.3);
\draw (-2,-7) circle(0.3);
\fill[color=white] (-1.5,-9) circle(0.3);
\draw (-1.5,-9) circle(0.3);
\fill[color=white] (-2.5,-9) circle(0.3);
\draw (-2.5,-9) circle(0.3);
\fill[color=black] (-4,-7) circle(0.3);
\draw (-4,-7) circle(0.3);
\fill[color=black] (-3.5,-9) circle(0.3);
\draw (-3.5,-9) circle(0.3);
\fill[color=black] (-4.5,-9) circle(0.3);
\draw (-4.5,-9) circle(0.3);
\fill[color=white] (-7,-5) circle(0.3);
\draw (-7,-5) circle(0.3);
\fill[color=white] (-6,-7) circle(0.3);
\draw (-6,-7) circle(0.3);
\fill[color=white] (-5.5,-9) circle(0.3);
\draw (-5.5,-9) circle(0.3);
\fill[color=white] (-6.5,-9) circle(0.3);
\draw (-6.5,-9) circle(0.3);
\fill[color=white] (-8,-7) circle(0.3);
\draw (-8,-7) circle(0.3);
\fill[color=white] (-7.5,-9) circle(0.3);
\draw (-7.5,-9) circle(0.3);
\fill[color=black] (-8.5,-9) circle(0.3);
\draw (-8.5,-9) circle(0.3);
\draw (10,0) node[right]{Generation $4$};
\draw (10,-3) node[right]{Generation $3$};
\draw (10,-5) node[right]{Generation $2$};
\draw (10,-7) node[right]{Generation $1$};
\draw (10,-9) node[right]{Generation $0$};
\end{scope}\end{tikzpicture}
\caption{Five generations producing 7 mutants from 3 mutations.}
\label{generations}
\end{center}
\end{figure}

The quantity $m$ is the fundamental parameter that describes the probability distribution of the random variable $R$ which is the number of cells that have acquired the mutation.  All the formulas described in this paper will be of the form
\begin{equation}
\label{of the form}
\Pr(R = r) = p_r(m) = q_r(m) e^{-\alpha m}
\end{equation}
where $m=\mu n$ is the fundamental parameter that we try to estimate, $\alpha$ is a real number, and $q_r(m)$ is a polynomial of degree $r$.

We assume that all cells take approximately the same amount of time between cell divisions.  We use the letter $g$ to denote the number of generations backwards in time, setting $g = 0$ to denote when the experiment finishes.  Under this assumption, the cell divisions that took place recently (that is, with $g$ small), are assumed to take place synchronously.  However the bulk population may or may not be assumed to be dividing asynchronously.  We give each mutant a \emph{generation number} $g\ge 1$, which says this mutant was created $g$ generations ago.  In future papers we will drop the assumption of the same amount of time between cell divisions.

Another assumption we make is that the population of non-mutant cells is effectively infinite.  Thus the number of non-mutant cells is effectively the same as the number of all cells $g$ generations ago, and does not depend upon the number of mutations created.

Finally, we will allow for the possibility that cells may die or become non-functional.  If cell death does not occur, then the total number of cells $g$ generations ago is $2^{-g} n$.  Figure~\ref{generations} illustrates the process looking only four generations back, but in our formulas we assume that the total number of generations is effectively infinite.

Haldane produced a distribution for the Luria-Delbr\"uck experiment based on the assumption that all the cells divide with perfect synchronicity, and no cells die.  The formula can be described thus: given an integer $r\ge 0$, let $P_r$ denote the set of sequences $(a_0,a_1,\dots)$ of non-negative integers, with only finitely many non-zero terms, such that $\sum_{s=0}^\infty a_s 2^s = r$.  Then
\begin{equation}
\label{haldane prob}
\Pr(R = r) = e^{-m} \sum_{(a_s)\in P_r} (m/2)^{\sum_{s=0}^\infty a_s} \Big/ \prod_{s=0}^\infty a_s\!! \, 2^{s a_s}
\end{equation}
(Note that \cite{sarkar} has a typographical error in stating this formula.  Note also that his $g$ is our $m/2$.)

\section{Generating functions for probability distributions}

If $R$ is a random variable that takes values in the non-negative integers, then the \emph{probability generating function} of $R$ is defined by the formula
\begin{equation}
G_R(x) = \sum_{r=0}^\infty \Pr(R = r) x^r
\end{equation}
Note that $G_R$ is an analytic function, with radius of convergence at least as big as 1.  We will make much use of the following well known results.

\begin{proposition}\label{gen}
Let $X_k$ ($1 \le k < \infty$) be a sequence of non-negative integer valued independent random variables.
\begin{enumerate}
\item If the sum $Z = \sum_{k=1}^\infty X_k$ converges almost surely, then
\begin{equation}
G_Z(x) = \prod_{k=1}^\infty G_{X_k}(x)
\end{equation}
\item If $X_k$ are identically distributed, and $N$ is a non-negative integer valued random variable that is independent of the $X_k$, and if $Z = \sum_{n=1}^N X_k$, then
\begin{equation}
G_Z(x) = G_N(G_{X_k}(x))
\end{equation}
\item If $N$ is a Poisson random variable with mean $\lambda$, then
\begin{equation}
G_N(x) = e^{\lambda (x-1)}
\end{equation}
\end{enumerate}
\end{proposition}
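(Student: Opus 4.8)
The plan is to reduce everything to the elementary identity $G_R(x) = E[x^R]$, valid for $x \in [0,1]$, together with the two-variable case of part (1). For independent non-negative integer valued $X, Y$ we have $x^{X+Y} = x^X x^Y$, and since $x^X$ and $x^Y$ are bounded functions of independent random variables, $E[x^{X+Y}] = E[x^X]\,E[x^Y]$; that is, $G_{X+Y}(x) = G_X(x)\,G_Y(x)$ on $[0,1]$. An immediate induction then gives $G_{X_1 + \dots + X_n}(x) = \prod_{k=1}^{n} G_{X_k}(x)$ for any finite collection of independent summands. Each such identity, once known on $[0,1]$, extends to the common disc of convergence by analytic continuation, since the functions involved are analytic; I will not repeat this remark below.

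For part (1), put $Z_n = \sum_{k=1}^{n} X_k$, so that $Z_n$ is a non-decreasing sequence of non-negative integers with $Z_n \uparrow Z$ almost surely, and $Z < \infty$ almost surely by hypothesis. Fix $x \in [0,1]$. On the almost-sure event $\{Z < \infty\}$ the integers $Z_n$ are eventually equal to $Z$, so $x^{Z_n} \to x^{Z}$ almost surely; since $0 \le x^{Z_n} \le 1$, dominated convergence gives $E[x^{Z_n}] \to E[x^{Z}] = G_Z(x)$. By the finite-sum identity above, $E[x^{Z_n}] = \prod_{k=1}^{n} G_{X_k}(x)$, so the partial products converge and $\prod_{k=1}^{\infty} G_{X_k}(x) = G_Z(x)$, as claimed. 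Note that convergence of the infinite product is part of the conclusion, not an assumption; the almost-sure finiteness of $Z$ is exactly what is needed to make it hold.

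For part (2), condition on $N$. Given $N = n$, the random variable $Z = \sum_{k=1}^{n} X_k$ is a finite sum of independent, identically distributed random variables that are moreover independent of $N$, so the finite-sum identity yields $E[x^{Z} \mid N = n] = G_{X_k}(x)^{n}$. Averaging over $N$ gives $G_Z(x) = E[x^{Z}] = E\bigl[G_{X_k}(x)^{N}\bigr] = G_N\bigl(G_{X_k}(x)\bigr)$, the last step being the definition of $G_N$ evaluated at the point $y = G_{X_k}(x)$; this is legitimate because $G_{X_k}$ maps $[0,1]$ into $[0,1]$, so $y$ lies in the domain of $G_N$. For part (3) one substitutes the Poisson mass function into the definition and sums the exponential series,
\[
G_N(x) = \sum_{n=0}^{\infty} e^{-\lambda}\frac{\lambda^{n}}{n!}x^{n} = e^{-\lambda}\sum_{n=0}^{\infty}\frac{(\lambda x)^{n}}{n!} = e^{-\lambda}e^{\lambda x} = e^{\lambda(x-1)},
\]
the rearrangement being justified by absolute convergence of the series for every $x$.

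The step most in need of care is part (1): one must keep straight that the infinite product is being produced rather than presupposed, and that the hypothesis $Z < \infty$ almost surely is precisely what rules out the degenerate limit $x^{Z_n} \to 0$ for $x < 1$. Parts (2) and (3) are then routine, the only recurring caveat being to keep every argument of every generating function inside $[0,1]$ — equivalently, inside the radius of convergence — so that all the series manipulations make sense.
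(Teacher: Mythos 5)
Your proof is correct: the finite-sum product identity via independence of $x^{X}$ and $x^{Y}$, the monotone/dominated-convergence passage to the infinite sum using that $Z_n$ is eventually equal to $Z$ on $\{Z<\infty\}$, the conditioning on $N$ for the compound sum, and the direct series computation for the Poisson case are all sound, and you correctly treat convergence of the infinite product as a conclusion rather than a hypothesis. The paper itself offers no proof of Proposition~\ref{gen}, stating it as a collection of well known results, so there is nothing to compare against; your argument is the standard one and fills that gap adequately.
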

All the distributions we shall consider will have a generating function of the form
\begin{equation}
\label{G H}
G_R(x)  = e^{-\alpha m} e^{m H_R(x)}
\end{equation}
where $\alpha > 0$, and $H_R(x)$ is an analytic function satisfying $H_R(0) = 0$ with Taylor series
\begin{equation}
H_R(x) = \sum_{k=1}^\infty h_k x^k
\end{equation}
with $h_k \ge 0$ for all $k \ge 1$.

For example if the mutation acquisition were Lamarckian, then $R$ would have the Poisson distribution with mean $m$, that is
\begin{equation}
p_r(m) = e^{-m} m^r/r! , \quad G_R = e^{-m} e^{m x}
\end{equation}
and it is well known that the generating function is

Another example is the Lea-Coulson distribution \cite{lea-coulson}.  This assumes that the mutation acquisition is Darwinian, but it also approximates the discrete replication process by a continuous process.  They obtain a distribution that satisfies~\eqref{G H} with $\alpha = 1$ and
\begin{equation}
\label{lea-coulson}
H_R(x) = \sum_{k=1}^\infty \frac{x^k}{k(k+1)}
=
\frac{x + (1-x)\log(1-x)} x
\end{equation}

There are two ways to compute the polynomials $q_r$ from the power series for $H_R(x)$.  One way is to place $H_R(x)$ into the Taylor's series for $e^x$.  This involves multiplication of polynomials.  We discuss this method in Section~\ref{section fourier}.

The other algorithm is described in Ma, Sandri, and Sarkar \cite{ma}:
\begin{gather}
q_0(m) = 1 \\
\label{ma}
q_{r}(m) = m \sum_{s=1}^{r} \frac sr \, h_{s} q_{r-s}(m)
\end{gather}
In practice, this algorithm worked very well.  Note that to compute $q_r(m)$, one only needs to know $h_s$ for $s \le r$.

\section{The generating function approach}

The approach adopted throughout this paper is to compute the generating function for $R$, the number of mutations.  Let us first illustrate this method to derive the Haldane distribution.  We assume the cell division is completely synchronous, and that no cells die or malfunction.

While this paper was being prepared, we discovered that the following result is a special case of a result that appeared in 2013 \cite{ycart}.

\begin{theorem} \label{haldane thm}
If $R$ has the Haldane distribution, then its generating function is given by equation~\eqref{G H} with $\alpha = 1$ and
\begin{equation}
\label{haldane equ}
H_R(x) = \sum_{g=0}^\infty \frac{x^{2^{g}}}{2^{g+1}}
\end{equation}
Hence from the Ma-Sandri-Sarkar algorithm equation~\eqref{ma} we obtain the rapid formula
\begin{equation}
\label{ma haldane}
q_{r}(m) = \frac m{2r} \sum_{g=0}^{\lfloor\log_2 r\rfloor} q_{r-2^{g}}(m)
\end{equation}
\end{theorem}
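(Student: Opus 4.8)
The plan is to use the `backwards in time' paradigm together with Proposition~\ref{gen} to build up the generating function $G_R(x)$ generation by generation. Fix a large generation count $g$ (to be sent to infinity at the end). Looking back $g$ generations, the population consists of $2^{-g}n$ cells, each of which, together with its descendants over those $g$ generations, forms a little binary tree of depth $g$. A mutation that occurred in generation $s$ (with $1 \le s \le g$) produces a subtree of mutants all of which survive; a mutation in generation $s$ therefore contributes exactly $2^{s}$ mutants to $R$ --- wait, more carefully: if the \emph{mutant} first appears $s$ generations back, then it has $2^{s}$ descendants at generation $0$, so it contributes $2^{s}$ to $R$. Actually one must be careful whether a mutation happening at a given division contributes $2^{s-1}$ or $2^{s}$; I would pin this down by matching against the explicit set $P_r$ in~\eqref{haldane prob}, where the exponents are $2^{s}$ and the per-mutation weight is $m/2$, which suggests indexing so that a mutation contributes $2^{g}$ for $g \ge 0$ and carries weight $m/2$ each.

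First I would compute the generating function of the number of mutants descended from a single lineage slot, conditioned on \emph{one} mutation occurring there at a prescribed generation. Because each division independently produces a mutation among the offspring with infinitesimal probability $\mu$, and there are $2^{-g}n$ starting cells each spawning roughly $2^{g}$ divisions back through the tree (organized by level), the number of mutations born at level $g$ back is, in the limit $\mu \to 0$, $n \to \infty$, $\mu n = m$, a Poisson random variable. The key bookkeeping step is to count, for each $g \ge 0$, the expected number of mutations whose clone has size $2^{g}$ at the end: there are $2^{-(g+1)}n$ cell-divisions, $g+1$ generations back, that give rise to a clone of exactly that size (the factor $\tfrac12$ coming from ``one of the offspring'' acquiring the mutation), so the mean number of such mutations tends to $m/2^{g+1}$. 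By Proposition~\ref{gen}(3) the generating function of the count of size-$2^{g}$ clones is $\exp\!\big((m/2^{g+1})(y-1)\big)$ where $y$ marks one clone; but one clone contributes $2^{g}$ to $R$, so we substitute $y = x^{2^{g}}$.

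Next, mutations born at different generations back are independent (they come from disjoint sets of cell divisions, and the non-mutant population is treated as deterministic), so by Proposition~\ref{gen}(1) we multiply:
\begin{equation}
G_R(x) = \prod_{g=0}^{\infty} \exp\!\Big(\frac{m}{2^{g+1}}\big(x^{2^{g}}-1\big)\Big)
= \exp\!\Big(m\sum_{g=0}^{\infty}\frac{x^{2^{g}}-1}{2^{g+1}}\Big).
\end{equation}
Since $\sum_{g=0}^{\infty} 2^{-(g+1)} = 1$, the constant term pulls out as $e^{-m}$, giving precisely~\eqref{G H} with $\alpha = 1$ and $H_R(x) = \sum_{g=0}^{\infty} x^{2^{g}}/2^{g+1}$, which is~\eqref{haldane equ}. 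To be fully rigorous I would then verify that extracting the coefficient of $x^{r}$ from $e^{-m}e^{mH_R(x)} = e^{-m}\prod_{g}\exp(mx^{2^{g}}/2^{g+1})$ by expanding each exponential as $\sum_{a_g\ge 0}(m/2)^{a_g}/(a_g!\,2^{g a_g})\,x^{a_g 2^{g}}$ and collecting terms with $\sum_g a_g 2^{g} = r$ reproduces exactly Haldane's formula~\eqref{haldane prob}, confirming the identification. Finally, the rapid recursion~\eqref{ma haldane} is immediate: the Taylor coefficients of $H_R$ are $h_k = 1/(2k)$ when $k = 2^{g}$ for some $g \ge 0$ and $h_k = 0$ otherwise, so in~\eqref{ma} only the indices $s = 2^{g}$ with $2^{g} \le r$ survive, each with $\tfrac{s}{r}h_s = \tfrac{s}{r}\cdot\tfrac{1}{2s} = \tfrac{1}{2r}$, yielding $q_r(m) = \tfrac{m}{2r}\sum_{g=0}^{\lfloor \log_2 r\rfloor} q_{r-2^{g}}(m)$.

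The main obstacle I anticipate is the combinatorial/limiting bookkeeping in the second paragraph: getting the factor $2^{-(g+1)}$ exactly right (rather than $2^{-g}$ or $2^{-(g+2)}$), justifying that the per-generation mutation count is asymptotically Poisson with that mean, and confirming the clone of a generation-$g$ mutant has size exactly $2^{g}$ in the end-state indexing. The cleanest safeguard is to not trust the heuristic count on its own but to cross-check the resulting $H_R$ against Haldane's explicit $P_r$-sum~\eqref{haldane prob}, since that sum already encodes the correct weights $m/2$ per mutation and clone sizes $2^{s}$; any indexing error will show up immediately as a mismatch. The infinite-product convergence (Proposition~\ref{gen}(1)) and the interchange of sum and coefficient extraction are routine given $h_k \ge 0$ and $\sum h_k \cdot$ (clone size weighting) converging, so those I would treat briefly.
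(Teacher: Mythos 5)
Your proposal is correct and follows essentially the same route as the paper's own proof: the number of mutations created $g$ generations back is Poisson with mean $m/2^{g+1}$, each contributes a clone of $2^{g}$ mutants, and independence across generations gives $G_R(x)=\exp\bigl(m\sum_{g\ge 0}2^{-g-1}(x^{2^{g}}-1)\bigr)$, from which $\alpha=1$, equation~\eqref{haldane equ}, and the recursion~\eqref{ma haldane} follow exactly as you describe. Your extra cross-check against Haldane's explicit sum~\eqref{haldane prob} and the spelled-out derivation of the recursion are fine additions but not a different argument.
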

From equation~\eqref{ma haldane} we obtain the polynomials given in Table~\ref{fig-probs}.

\begin{table}
\begin{tabular}{|c|c|}
\hline
$r$ & $\Pr(X = r)$ \\
\hline
 $0$ & $e^{-m}$ \\
 $1$ & $\frac{m}{2} \, e^{-m}$ \\
 $2$ & $\left(\frac{m^2}{8}+\frac{m}{4}\right) e^{-m}$ \\
 $3$ & $\left(\frac{m^3}{48}+\frac{m^2}{8}\right) e^{-m}$ \\
 $4$ & 
   $\left(\frac{m^4}{384}+\frac{m^3}{32}+\frac{m^2}{32}+\frac{m}{
   8}\right) e^{-m}$ \\
 $5$ & 
   $\left(\frac{m^5}{3840}+\frac{m^4}{192}+\frac{m^3}{64}+\frac{m
   ^2}{16}\right) e^{-m}$ \\
 $6$ & 
   $\left(\frac{m^6}{46080}+\frac{m^5}{1536}+\frac{m^4}{256}+\frac{7 m^3}{384}+\frac{m^2}{32}\right) e^{-m}$ \\
 $7$ & 
   $\left(\frac{m^7}{645120}+\frac{m^6}{15360}+\frac{m^5}{1536}+\frac{m^4}{256}+\frac{m^3}{64}\right) e^{-m}$ \\
 $8$ & 
   $\left(\frac{m^8}{10321920}+\frac{m^7}{184320}+\frac{m^6}{1228
   8}+\frac{m^5}{1536}+\frac{25
   m^4}{6144}+\frac{m^3}{256}+\frac{m^2}{128}+\frac{m}{16}\right
   ) e^{-m}$ \\
 $9$ & 
   $\left(\frac{m^9}{185794560}+\frac{m^8}{2580480}+\frac{m^7}{12
   2880}+\frac{m^6}{11520}+\frac{3
   m^5}{4096}+\frac{m^4}{512}+\frac{m^3}{256}+\frac{m^2}{32}\right) e^{-m}$ \\
 $10$ & 
   $\left(\frac{m^{10}}{3715891200}+\frac{m^9}{41287680}+\frac{m^
   8}{1474560}+\frac{7 m^7}{737280}+\frac{5 m^6}{49152}+\frac{61
   m^5}{122880}+\frac{m^4}{768}+\frac{5
   m^3}{512}+\frac{m^2}{64}\right) e^{-m} $\\
\hline
\end{tabular}

\

\caption{Probabilities according to Haldane's distribution.}
\label{fig-probs}
\end{table}

\begin{proof} We know that $g$ generations ago, the population count is $n / 2^g$.  All of these were created from cells dividing $g+1$ generations ago, and therefore half of these have copied genotypes, the rest having the original genotype.  Thus the number of mutants created at that time is $N_g$, a Poisson random variable with parameter $\frac12\mu n / 2^g = 2^{-g-1} m$.  The number of mutations that arise from the $g$th generation mutants is $2^g N_g$.  Hence
\begin{equation}
R = \sum_{g=0}^\infty 2^g N_g
\end{equation}
Furthermore, because we suppose the total population of cells to be very large, we can assume that the random variables $N_g$ are independent.  Thus we obtain the generating function
\begin{equation}
\label{gen fun Haldane}
\begin{aligned}
G_R(x) &= \prod_{g=0}^\infty G_{N_g}(x^{2^g}) = \prod_{g=0}^\infty e^{2^{-g-1} m (x^{2^g}-1)} \\
&= \exp\left(m \sum_{g=0}^\infty 2^{-g-1} (x^{2^g}-1)\right)
= e^{-m} \exp\left(m \sum_{g=0}^\infty 2^{-g-1} x^{2^g}\right)
\end{aligned}
\end{equation}
\end{proof}

This approach can be generalized as follows.  First, we define times $t_k$, ($k = 0, 1, 2, \dots$), with $0 = t_0 < t_1 < t_2 < \dots$.  The units of time will be generations, and as with generation number, a positive number denotes the number of generations before the experiment is completed.  Let us suppose that a time $t_k$, ($k \ge 1$), and only at those times, some or all of the cells divide.  For each $k$, we let $n_k$ denote the total population of cells at time $t_k$, and let $\pi_k = n_k/n$.  Thus $1 = \pi_0 \ge \pi_1 \ge \pi_2 \ge \dots > 0$.  Because we have assumed that the total population of cells is effectively infinite, we can assume $t_k \to \infty$ as $k \to \infty$.

We consider two random variables:
\begin{enumerate}
\item for each possible $k \ge 0$, the number of mutants $N_k$ created at negative time $t_k$, which will always be a Poisson random variable with parameter $\pi_{k} - \pi_{k+1}$;
\item for each possible $k \ge 0$, the number of mutations $M_k$ that arise from any one mutant which was created at time $t_k$.
\end{enumerate}
Then the total number of mutations will always be given be the formula
\begin{equation}
R = \sum_{k=0}^\infty \sum_{l=1}^{N_k} M^{(l)}_k
\end{equation}
where $M^{(l)}_g$ denotes independent copies of $M_l$.  Hence we obtain the generating function
\begin{equation}
\label{gen fun approach}
\begin{aligned}
G_R(x) &= \prod_{k=0}^\infty G_{N_k}(G_{M_k}(x)) = \prod_{k=1}^\infty e^{m (\pi_{k} - \pi_{k+1}) (G_{M_k}(x)-1)} \\
&= \exp\left(m \sum_{k=0}^\infty (\pi_{k} - \pi_{k+1}) (G_{M_k}(x)-1)\right)
\end{aligned}
\end{equation}
Let's explain first how the Haldane distribution fits into this paradigm.  Since cell division is synchronous, we set $t_g = g$ where the time is measured in multiples the time between cell divisions.  Since no cells die, we have that $\pi_g = 2^{-g}$, and $M_g = 2^{g}$, so that $G_{M_g}(x) = x^{2^{g}}$.  Substituting into equation~\eqref{gen fun approach}, we obtain equation~\eqref{haldane equ}.

We can also obtain the Lea-Coulson distribution using this paradigm.  Note the argument we present is in their paper as a ``second proof" (and indeed the whole ``generating function approach" is inspired by their paper).  We still suppose that no cells die, but we assume that the formulas $\pi_k = 2^{-t_k}$ and $M_k = 2^{t_k}$ still hold when $t_k$ is not an integer.  To get the most even spacing of the generating of cells, we set $t_k = \log_2(k+1)$.  Then $\pi_k - \pi_{k+1} = 1/(k+1) - 1/(k+2) = 1/(k+1)(k+2)$, and $M_k = k+1$.  Thus equation~\eqref{lea-coulson} follows after substituting the summation variable $k$ by $k-1$.

\section{Haldane's distribution versus the Lea-Coulson distribution}

The Lea-Coulson assumes that the generations are somehow being created continuously.  From this point of view, it would seem that their distribution is unrealistic.

However there also are problems with the Haldane distribution.  It is obvious that the probability of obtaining $128$ mutants is far higher than the probability of obtaining $127$ mutants.  If one had a single mutation, say, seven generations ago, this will result in $128$ mutants.  But it is possible that a few of these mutants will either die (and perhaps any of their mutant ancestors died), or fail to be transferred to the plate properly.  Thus while Haldane's distribution puts a very low probability of getting $127$ mutants compared to $128$ mutants, nevertheless from an experimental point of view, it is very likely that not all of the $128$ mutants will be observed, and the chances of seeing $127$ or less mutants is comparable with the probability of seeing $128$ mutants.

To show this effect, in Figure~\ref{log log} we give log/log plots of $p_r(m)$ to $r$, for the value $m=2$, for both distributions.  On the $x$-axis we show $\log_2(r)$, and on the $y$-axis we show $\log_{10}(p_r(2))$.  Notice that for the Lea-Coulson distribution that this graph approaches a straight line as $r \to \infty$.  In fact, in \cite{ma}, it is shown that $p_r(m) \approx c_m r^{-2}$ as $r\to\infty$, where $c_m$ depends only on $m$.

\begin{figure}
\begin{center}
\includegraphics[scale=0.45]{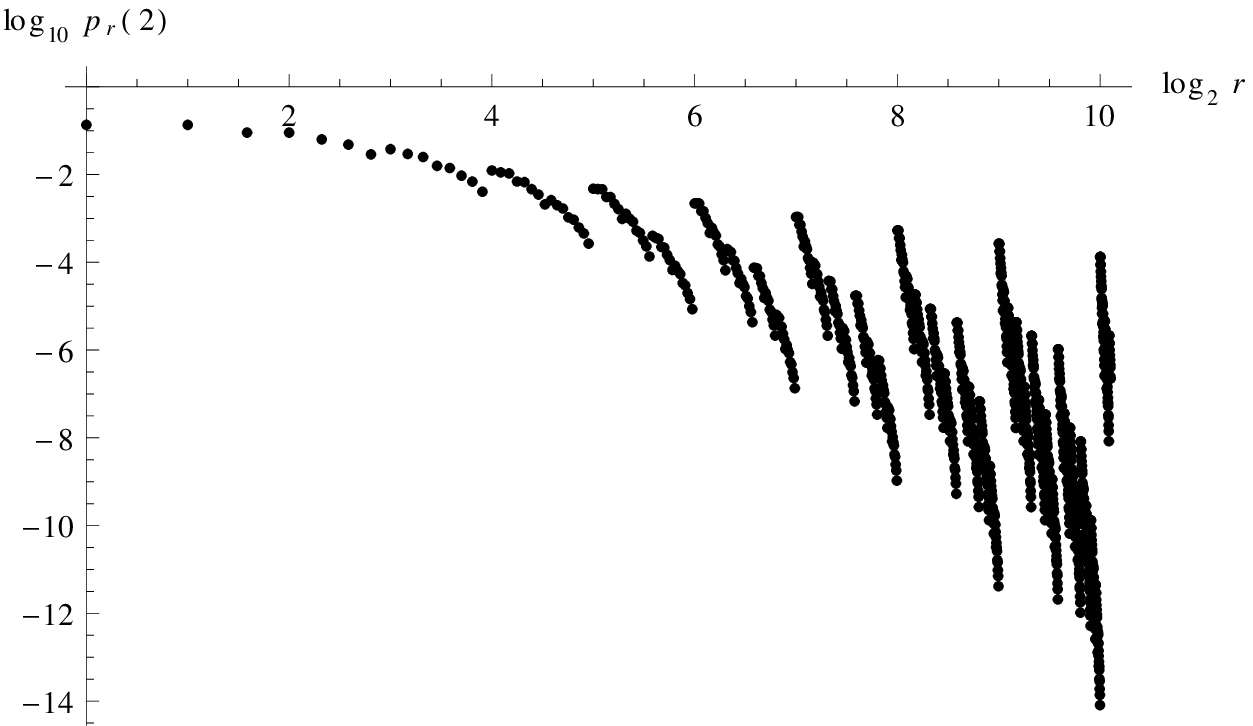}
\includegraphics[scale=0.45]{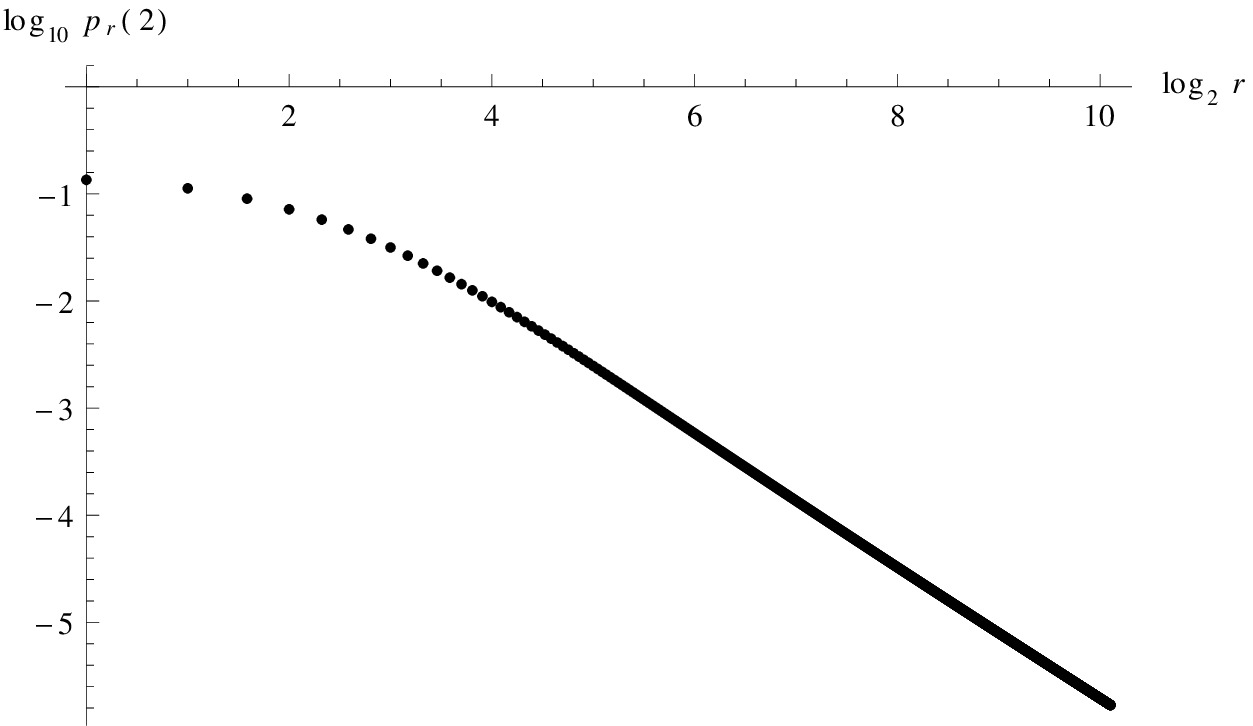}
\caption{Log/Log graphs of $p_r(2)$ against $r$, for Haldane's distribution, and for the Lea-Coulson distribution.}
\label{log log}
\end{center}
\end{figure}

We will look at several ways to modify the Haldane distribution.
\begin{enumerate}
\item Allow that some cells may die.
\item Examine the situation when only a fixed proportion of cells are plated with the antibiotic.
\item Consider asynchronous cell division.
\item Allow that the mutants replicate at a different rate than the non-mutants \cite{koch}.
\end{enumerate}
We will explore the third case only superficially, as we have not yet obtained formulas that fully handle this issue.

\section{Accounting for cell death}

Let us now assume that a cell dies with probability $\theta$ before it replicates.  Thus from one cell, the expected number of cells that come from this cell one generation later is $2(1-\theta)$.  Thus
\begin{equation}
\pi_g = (2(1-\theta))^{-g}
\end{equation}
The computation of $G_{M_g}$ is given by a recursion relationship.  From a mutant created $g$ generations ago, with probability $\theta$ this mutant will die, and with probability $1-\theta$ there will by $X+Y$ cells, where $X$ and $Y$ are independent and have the same distribution as $M_{g-1}$.  Thus we obtain
\begin{gather}
G_{M_0}(x) = x \\
\label{recurrence}
G_{M_{g+1}}(x) = \theta + (1-\theta) [G_{M_g}(x)]^2
\end{gather}
These formulas are then substituted into equation~\eqref{gen fun approach} to obtain equation equation~\eqref{of the form} with
\begin{gather}
\alpha = \sum_{g=0}^\infty (1-2\theta)(2(1-\theta))^{-g-1} (G_{M_g}(0) - 1) \\
H_R(x) = \sum_{g=0}^\infty (1-2\theta)(2(1-\theta))^{-g-1} (G_{M_g}(x) - G_{M_g}(0))
\end{gather}
It is unlikely we will find an explicit formula for solving the recurrence relation~\eqref{recurrence}, because this is in essence the same recurrence relation that is used to define the famous Mandelbrot set \cite{mandelbrot}.

\section{Plating only a fixed proportion of the cells}

This is discussed for the Lea-Coulson distribution in \cite{stewart:a,stewart:b}.

Let us now assume that we perform an experiment such that the probability of obtaining $R = r$ mutants is given by $p_r$, with corresponding generating function $F_R(x) = e^{-\alpha m} \exp(m H_R(x))$.  Now suppose that we only plate a proportion $\theta \in (0,1]$ of the cells.  How many mutants will we observe?  Let us call the number of mutants observed $R_\theta$.

\begin{theorem} The random variable $R_\theta$ has generating function
\begin{equation}
G_\theta(x) = G_R(1-\theta + \theta x) = e^{-\alpha_\theta m} \exp(m H_\theta(x))
\end{equation}
where
\begin{gather}
\alpha_\theta = \alpha - H_R(1-\theta) \\
H_\theta(x) = H_R(1-\theta + \theta x) - H_R(1-\theta)
\end{gather}
\end{theorem}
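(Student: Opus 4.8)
The plan is to recognize $R_\theta$ as a \emph{thinning} of $R$ and then read off the generating function from Proposition~\ref{gen}. First I would pin down the model implicit in ``plating only a proportion $\theta$ of the cells'': each individual cell — mutant or not — is placed on the plate independently with probability $\theta$. (Since the total population $n$ is effectively infinite, ``plating a fixed proportion $\theta$'' and ``including each cell independently with probability $\theta$'' coincide up to fluctuations that vanish in the limit.) Consequently, conditional on $R = r$, the observed count $R_\theta$ is a sum of $r$ independent $\mathrm{Bernoulli}(\theta)$ variables; crucially this depends only on the total number of mutants, not on how those mutants are distributed among clonal families. Writing $B$ for a $\mathrm{Bernoulli}(\theta)$ variable, we have $G_B(x) = 1-\theta+\theta x$.

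Next I would apply Proposition~\ref{gen}(2) with $N = R$ and each summand equal to $B$, which are independent of $R$. This immediately gives
\[
G_\theta(x) = G_R\bigl(G_B(x)\bigr) = G_R(1-\theta+\theta x).
\]
The composition is legitimate because $|1-\theta+\theta x|\le 1$ whenever $|x|\le 1$, so the argument stays inside the disk of convergence of $G_R$ (radius at least $1$ by the remark after the definition of probability generating functions).

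Finally I would substitute the assumed form $G_R(x) = e^{-\alpha m}\exp(m H_R(x))$ and rearrange so that the exponent vanishes at $x=0$:
\[
G_\theta(x) = e^{-\alpha m}\exp\bigl(m H_R(1-\theta+\theta x)\bigr)
 = e^{-(\alpha - H_R(1-\theta))\,m}\,\exp\bigl(m[H_R(1-\theta+\theta x) - H_R(1-\theta)]\bigr).
\]
Setting $H_\theta(x) = H_R(1-\theta+\theta x) - H_R(1-\theta)$, which satisfies $H_\theta(0)=0$, and $\alpha_\theta = \alpha - H_R(1-\theta)$ yields exactly the claimed identity.

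There is no deep obstacle here; the step I would state most carefully is the thinning reduction — the assertion that observing a subsample ``commutes'' with the underlying mutation process, so that $R_\theta \mid R \sim \mathrm{Binomial}(R,\theta)$ regardless of the family structure of the mutants. For completeness I would also note that the conclusion remains within the paper's standard framework: expanding $(1-\theta+\theta x)^k$ by the binomial theorem shows $H_R(1-\theta+\theta x)$ has nonnegative Taylor coefficients, hence so does $H_\theta$ after its constant term is removed; and since a genuine probability generating function forces $G_R(1)=1$, i.e.\ $\alpha = H_R(1)$, and $H_R$ is nondecreasing on $[0,1]$, we get $\alpha_\theta = H_R(1) - H_R(1-\theta) \ge 0$.
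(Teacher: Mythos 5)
Your proof is correct and follows essentially the same route as the paper: represent $R_\theta$ as a sum of $R$ independent Bernoulli$(\theta)$ indicators, apply Proposition~\ref{gen}(2) to get $G_\theta(x)=G_R(1-\theta+\theta x)$, and then rewrite the exponential form to identify $\alpha_\theta$ and $H_\theta$. The extra remarks you add (convergence of the composition, nonnegativity of the coefficients of $H_\theta$, and $\alpha_\theta\ge 0$) are sound but not needed beyond what the paper records.
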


\begin{proof}
We have that $R_\theta$ has the same distribution as $\sum_{r=1}^R I_r$, where $I_r$ is a sequence of independent random variables taking the value $1$ with probability $\theta$, and the value $0$ with probability $1-\theta$.  The result follows from applying Proposition~\ref{gen}.

Another way to obtain this formula is to use the binomial distribution
\begin{equation}
\Pr(R_\theta = r) = \sum_{n=r}^\infty \Pr(R=n) \Pr(R_\theta = r \big| R=n) 
= \sum_{n=r}^\infty \binom nr \theta^r(1-\theta)^{n-r} p_n
\end{equation}
and then rearrange the resulting double sum.
\end{proof}

Usually, the most timely and accurate numerical method for computing $\alpha_\theta$, and coefficients of the power series for $H_\theta$, is by direct evaluation from the power series of the derivatives of $H(x)$ at $x=1-\theta$.  For the Lea-Coulson probability measure, it makes sense to compute $\alpha$ directly by substituting equation~\eqref{lea-coulson}.  One might also be tempted to compute the coefficients of $H_\theta$ by symbolically differentiating $H_R$ given in equation~\eqref{lea-coulson}.  However the symbolic derivatives become very unwieldy.

\begin{figure}
\begin{center}
\includegraphics[scale=0.45]{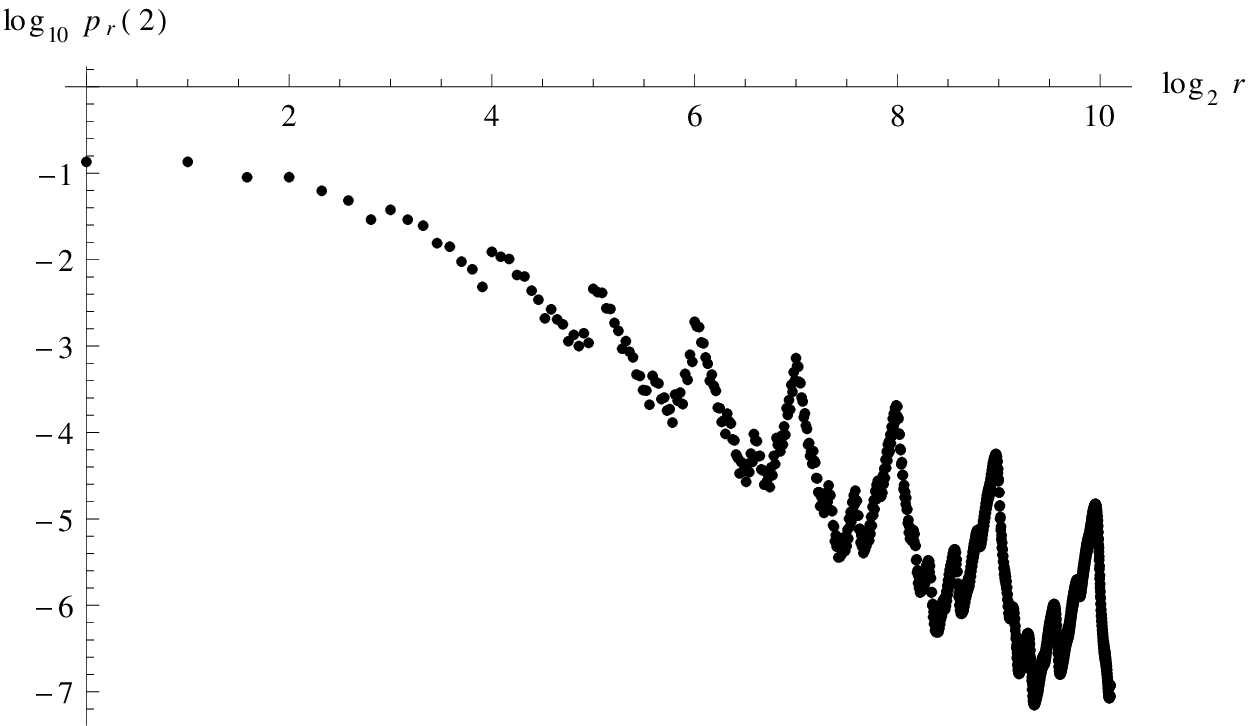}
\includegraphics[scale=0.45]{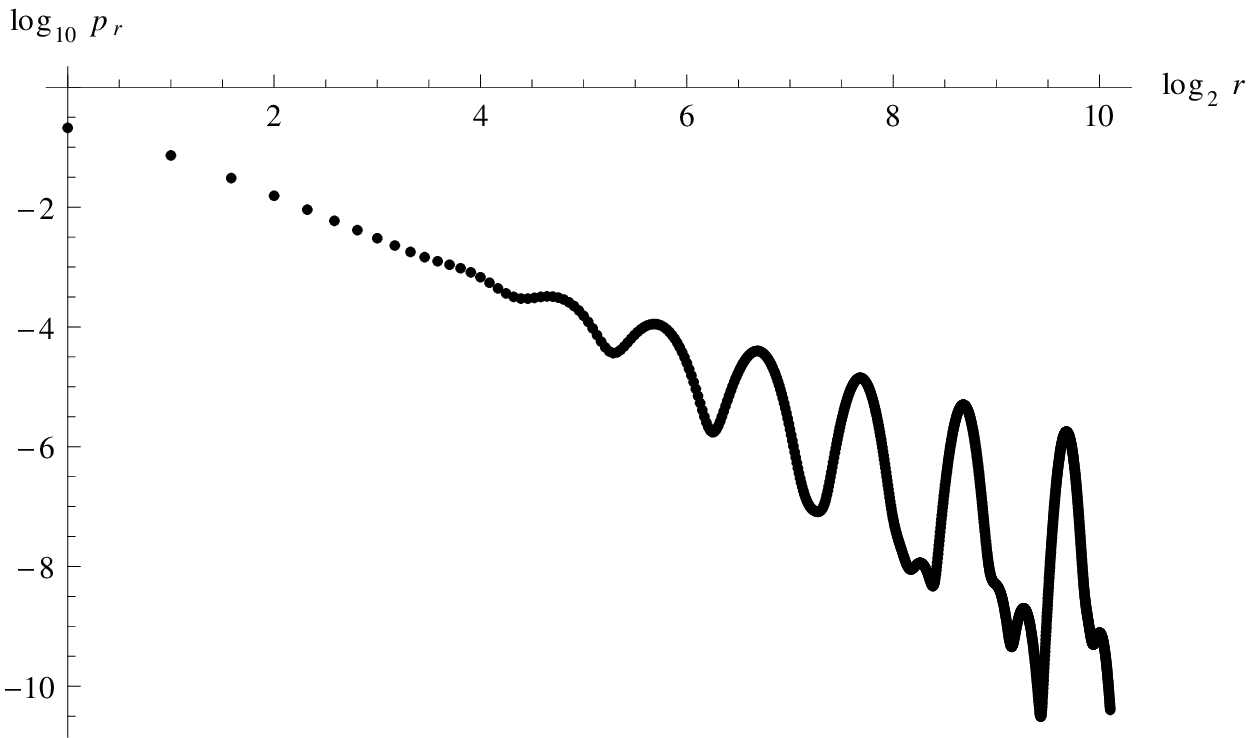}
\caption{Log/Log graph of $p_r = \Pr(R_\theta=r)$ against $r$, when cells die with probabilities $\theta=0.01$ (left hand side) or $\theta=0.1$ (right hand side), and $R$ has the Haldane distribution with $m = 2$.}
\label{log log 3}
\end{center}
\end{figure}

\section{Asynchronous cell division}

There are two notions of asynchronous cell division that we shall consider.
\begin{enumerate}
\item \emph{slight asynchronous cell division}  We suppose that cell division is asynchronous enough so that at time $g$ generations before the experiment ends, the cell population is given by $2^{-g} n$, even if $g$ is not an integer, but not so asynchronous so that over a few generations the cell division is well approximated by synchronous cell division.
\item \emph{full asynchronous cell division}  We suppose that cell division is sufficiently asynchronous that we should base our model on a probability distribution that describes the time for a cell to divide.
\end{enumerate}
Accounting for full asynchronous cell division seems to be very difficult, and we have not yet solved this problem\footnote{Note added in August 13, 2016: this problem has been solved in the thesis of Hesam Oveys \cite{oveys}.}.  In this chapter we shall consider slight asynchronous cell division, and show that it makes no difference to the Haldane distribution.

We assume that cell splitting takes place every $g/n$ time units, where at the end we let $n\to\infty$.  Therefore in equation~\eqref{gen fun approach}, we set $t_k$ = ${k/n}$, $\pi_k = 2^{-k/n}$, and $G_{M_k} = 2^{[k/n]}$, where $[x]$ denotes the integer part of $x$.  Then writing $k/n = g + j/n$ where $g = [k/n]$, we obtain
\begin{equation}
\begin{aligned}
G_R(x)
&= \exp\left(m \sum_{g=0}^\infty \sum_{j=0}^{n-1} (2^{1/n}-1) 2^{-g-(j+1)/n} (x^{2^g}-1)\right) \\
&= \exp\left(m \sum_{g=0}^\infty 2^{-g-1} (x^{2^g}-1)\right)
\end{aligned}
\end{equation}
where the last step comes from summing the geometric series.  Thus the formula is independent of $n$.

\section{Mutants replicating at a different rate than non-mutants}

The analogous formula to the Lea-Coulson was calculated by \cite{koch}.  We will do the analog for the Haldane distribution.  While this paper was being prepared, we found out that the this result appeared in 2013 \cite{ycart}.

Let $g$ denote the number of generations back in time as counted in the time for a mutant to replicate.  Suppose that the time for a non-mutant to replicate is $\tau$ times the time for a mutant to replicate.  Then we can apply equation~\eqref{gen fun approach} with $M_g = 2^g$, and $\pi_g = 2^{\tau g}$, to obtain equation equation~\eqref{of the form} with
\begin{equation}
\alpha = -1, \quad
H_R(x) = \sum_{g=0}^\infty (1-2^{-\tau})2^{-\tau g} x^{2^g}
\end{equation}

\section{Comparison of the Ma-Sandri-Sarkar algorithm with using the Fast Fourier Transform}
\label{section fourier}

Another way to compute the probability generating function given by equation~\eqref{G H} is to expand it using Taylor's series for $e^x$, and obtain
\begin{equation}
G_R(x)  = e^{-\alpha m} \sum_{n=0}^\infty \frac1{n!} m^n [H_R(x)]^n
\end{equation}
In order to calculate $p_r(m)$, it is only necessary to sum the series to the $r$th term, and to calculated $[H_R(x)]^n$ up to $x^r$.  It is well known that $[H_R(x)]^n$ can be rapidly calculated using the Fast Fourier Transform.  If one performs a algorithm complexity analysis, one finds that the time to compute $p_r(m)$ for $m \le n$ is $O(n^3)$ using the Ma-Sandri-Sarkar algorithm, but is $O(n^2 \log n)$ when using the Fast Fourier Transform.  (The exception is when using the Ma-Sandri-Sarkar for Haldane's distribution, when the time taken is also $O(n^2 \log n)$.)

A difficulty with the Fast Fourier Transform is getting sufficient accuracy in the coefficients of $G_R(x)$.  If one calculates $[H_R(x)]^n$ using long multiplication (and this is, in effect, what the Ma-Sandri-Sarkar algorithm is doing), then because all the coefficients of $H_R(x)$ are positive, there will be no canceling large numbers to produce small numbers in computing the coefficients.  However, the Fast Fourier Transform combines the various coefficients using complex numbers, and so the accuracy of the final answer is limited by the size of the largest coefficient.

However there is also the fast polynomial package FLINT \cite{flint}, which comes standard with the Sage software package \cite{sage}.  While all our computations were performed with Sage, we found that in our situations that the Ma-Sandri-Sarkar was much faster than using the built in polynomial multiplication.

\section{Analyzing data using likelihood functions/Bayesian statistics}

In this section, we will show how to compute the likelihood function to estimate the value of $m$ from experimentally obtained data.  Suppose we have performed $n$ separate experiments, using identical but independent protocols, and obtain counts of plates $r_1, r_2, \dots, r_n$.  Then the likelihood function is
\begin{equation}
\label{like}
L(m) = \prod_{k=1}^n \Pr(R = r_k)
\end{equation}
where $R$ has the distribution that we believe most correctly represents our situation.

The Bayesian approach is to suppose that $L(m)$ represents a probability distribution for a random variable $M$, where $\Pr(M \in [m_1,m_2])$ represents the `belief' that we have that the parameter $m$ lies between $m_1$ and $m_2$.  This is related to $L(m)$ via the formula
\begin{equation}
\Pr(M \in [m,m+d m]) = C f(m) L(m) d m
\end{equation}
where here we think of $dm$ is being infinitesimal.  Here $f(m) dm$ is usually called the prior distribution, and the constant $C$ is chosen so that the integral of the right hand side is zero.  The paper \cite{asteris} describes estimates of $M$ using Bayesian statistics, and they use prior distributions like $1$m $m^{-1}$ or $m^{-2}$.

Thus the area under likelihood function represents the probability distribution of $M$ if $f(m) = 1$.  However, since $m$ is some kind of scaling factor, it makes more sense to draw the graph of $L(m)$ on a graph where the $x$-axis represents $log(m)$.  Then the area under likelihood function represents the probability distribution of $M$ if $f(m) = 1/m$.

If we treat $M$ as a random variable, then it makes sense to also compute the expected value and variance of $M$.  The computing of $C$, $E(M)$ and $\text{var}(M)$ requires computing certain integrals of the form of linear combinations of $\int_0^\infty m^a e^{-bm} \, dm$, and so can be easily be computed using a change of variables and the $\Gamma$ function.

However, since we are interested in graphing these using a log scale on the $x$ axis, to get an idea of which range of $m$ we should use for plotting, we want to compute $E(\log(M))$ and $\text{var}(\log(M))$.  These are not so easily computed, but numerical integration does an outstanding job.

In any case, we would hope that the amount of data collected should be enough so that the formulas $E(M)$ or $\exp(E(\log(M)))$ or the maximum of $L(m)$, should give similar answers, even if different prior distributions are used.  And we would hope to see a bell-shape distribution, so we should get similar answers for the estimate of the middle 68.3\%-ile of belief, that is, $E(M) \pm  \text{var}(M)$ or $\exp(E(\log(M)) \pm \text{var}(\log(M)))$.

Finally, in performing the computations, the numbers involved tend to be much smaller than much computer software can represent.  For example, the commonly used double precision IEEE floating point representation cannot represent numbers much smaller than $10^{-308}$, and values far smaller than this will often arise in calculating the likelihood function.  However we used the software package sage \cite{sage}, and this was well capable of handling very small numbers.


\section{Simulations}

\section{Analysis of real data}

The following experiments were performed by students who were part of the `Mathematics in Life Science' program at the University of Missouri.

The mutations in question confer resistance to an antibiotic (canavanine) in yeast cells.  There were two types of resistant mutants, one giving red colonies, the other giving white colonies.  Our experiment focused on the red mutants, because we were able to sequence their DNA to find the exact mutation responsible for resistance.

In Experiment~A, the students plated 70 60\textgreek{m}l cultures, each arising from a 25ml culture.  In Experiment~B was to plate 50 60\textgreek{m}l cultures from a single 25ml culture.  Each culture started at a cell density of $10^5$ cells/ml and ended at a cell density of $10^8$ cells/ml.  See Figure~\ref{experiment}.

\begin{figure}
\begin{center}
\includegraphics[scale=0.25]{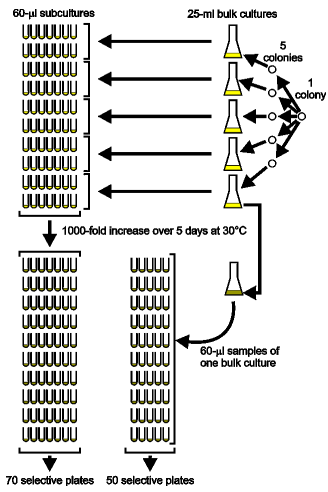}
\caption{A diagram illustrating the experiment.}
\label{experiment}
\end{center}
\end{figure}

In Experiment~A, two of the 70 cultures were accidentally destroyed.  The counts for the other 68 cultures were as follows: 2, 4, 0, 0, 0, 0, 244, 55, 0, 141, 0, 0, 0, 1, 0, 511, 0, 0, 0, 0, 4, 0, 0, 0, 0, 5, 95, 2, 0, 0, 11, 0, 0, 0, 5, 0, 0, 49, 0, 1, 0, 0, 0, 0, 0, 0, 3, 0, 0, 0, 3, 0, 0, 0, 11, 1, 16, 0, 0, 0, 1, 0, 1, 4, 1, 2, 0, 0.  These were analyzed using both the Haldane and Lea-Coulson distributions, with $\theta = 0.0024$.

mean = 0.976689,
stand dev = 0.126925.

\begin{figure}
\begin{center}
\includegraphics[scale=0.25,angle=-90]{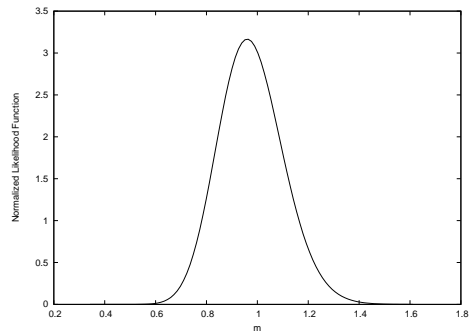}
\caption{The normalized likelihood function for real data.}
\label{post plot 2}
\end{center}
\end{figure}



The counts for Experiment~B were as follows: 155, 188, 189, 191, 173, 161, 164, 221, 191, 221, 148, 173, 186, 152, 195, 86, 90, 154, 133, 149, 165, 182, 162, 144, 129, 65, 165, 159, 151, 183, 170, 130, 140, 118, 162, 132, 154, 142, 134, 151, 150, 89, 147, 143, 142, 191, 93, 119, 163, 133.  Since these were all taken from the same culture, we only took the total value, which was 7628, and analyzed this using both the Haldane and Lea-Coulson distributions with $\theta = 0.12$.

The counts for the 50 60\textgreek{m}l cultures suggest we hit a jackpot.  However in looking at the distribution created by the Bayesian method, the data seems to rather confidently give a large figure for $m$, which we strongly suspect is too large.  One way to interpret this data is to use a log-log plot for the likelihood function.  In this case it can be seen ......  issues with how to interpret probabilities obtained using the Bayesian method.

\begin{figure}
\begin{center}
\includegraphics[scale=0.25,angle=-90]{}
\caption{The normalized likelihood function for the bulk culture.}
\label{post plot 3}
\end{center}
\end{figure}

\section*{Acknowledgments}

The first named author would like to thank Miriam Golomb for many useful conversations.

\end{document}